\DeclareMathOperator{\sgn}{sgn}
\DeclareMathOperator{\tr}{Tr}
\newtheorem{theorem}{Theorem}
\newtheorem{definition}{Definition}
\newtheorem{lemma}{Lemma}
\newtheorem{remark}{Remark}
\newtheorem{proposition}{Proposition}
\newcommand{\proofparagraph}[1]{\medskip\noindent{\bf #1}}
\newcommand{\dda}{{\rm d}}
\newcommand{\dd}{\,\dda}
\newcommand{\ddd}[1]{\dd^{#1}}
\newcommand{\ee}{{\rm e}}
\newcommand{\F}{\mathcal{F}}
\begin{document}

\title{On the BCS gap equation for superfluid fermionic gases
}

\author{\small Gerhard Br\"aunlich$^1$ and Christian Hainzl$^2$\\
\small\it Mathematical Institute, University of T{\"u}bingen \\[-1mm]
  \small\it Auf der Morgenstelle 10, 72076 T\"ubingen, Germany \\[-1mm]
  \small\it $^1$E-mail: gerhard.braeunlich@uni-tuebingen.de\\[-1mm]
  \small\it $^2$E-mail: christian.hainzl@uni-tuebingen.de \\\\
  \small Robert Seiringer \\
  \small\it Institute of Science and Technology Austria\\[-1mm]
  \small\it Am Campus 1, 3400 Klosterneuburg, Austria\\[-1mm]
  \small\it E-mail: robert.seiringer@ist.ac.at}

\date{\today}

\maketitle

\begin{abstract}
  We present a rigorous derivation of the BCS gap equation for superfluid fermionic gases with point interactions. Our starting point is the BCS energy functional, whose minimizer we investigate in the limit when the range of the interaction potential goes to zero.
\end{abstract}


\section{Introduction}
\label{sec:introduction}

In BCS theory \cite{bcs, Leggett, NRS} of superfluid fermionic gases the interaction 
between the spin-$1/2$ fermions is usually modeled by a contact potential. 
This approximation is justified for the low density atomic gases usually observed in the lab, since the range of the effective interaction is much smaller than the 
mean particle distance.  In the theoretical physics literature \cite{Leggett,randeria,NRS} the states of superfluidity are usually characterized via the simplified  BCS gap equation
\begin{equation}\label{se}
  -\frac{1}{4\pi a} =
  \frac{1}{(2\pi)^3}\int_{\mathbb{R}^3}\left(\frac{\tanh \big( \frac{ \sqrt{(p^2 - \mu)^2 + |\Delta|^2}}{2T} \big)}{\sqrt{(p^2 - \mu)^2 + |\Delta|^2}}
    -\frac{1}{p^2}\right) \ddd{3}p\,,
\end{equation}
where $\mu$ is a fixed chemical potential and $\Delta$ is the corresponding order parameter of the system.
This order parameter does not vanish below the  critical temperature $T_c$, uniquely defined by 
 $$  -\frac{1}{4\pi a} = \frac{1}{(2\pi)^3}\int_{\mathbb{R}^3}
    \left(
      \frac{\tanh\big(\frac{p^2-\mu}{2T_c}\big)}{p^2-\mu}
      -\frac{1}{p^2} \right) \ddd{3}p
$$
for $a<0$. 
The parameter $a$ is the scattering length of the corresponding interaction and the usual argument for the derivation of  equation \eqref{se} involves an ad-hoc renormalization scheme.
The main goal of this paper is to give a rigorous derivation of
equation \eqref{se} starting from the BCS functional of superfluidity \cite{Leggett,NRS,HHSS} for a sequence of interaction potentials 
$V_\ell$ with range tending to zero, and  scattering length $a(V_\ell)$ converging to a negative $a$. 
The present result is a consequence of our previous work \cite{BHS} where we treated the more general case of BCS-Hartree-Fock theory and where we allowed for potentials with strong repulsive core. Dropping the direct and exchange term, however, as we do here, allows us to give a shorter,
more transparent derivation of \eqref{se}, and also to work with simpler assumptions on the interaction potentials.


\section{The Model}
\label{sec:model}

We consider a gas of spin $1/2$ fermions in the thermodynamic limit at
temperature $T \geq 0 $ and chemical potential $\mu \in
\mathbb{R}$. The particles interact via a local two-body potential
which we denote by $V$. The state of the system is described by two
functions $\hat\gamma:\mathbb{R}^3\to \mathbb{R}_+$ and
$\hat\alpha:\mathbb{R}^3\to \mathbb{C}$, which are conveniently
combined into a $2\times 2$ matrix 
\begin{equation}
  \label{def:Gamma}
  \Gamma(p) = \left(
    \begin{array}{cc}
      \hat{\gamma}(p) & \hat{\alpha}(p)\\
      \overline{\hat{\alpha}(p)} & 1-\hat{\gamma}(-p)
    \end{array}
  \right),
\end{equation}
required to satisfy $0\leq \Gamma \leq \mathds{1}_{\mathbb{C}^2}$ at
every point $p\in\mathbb{R}^3$. The function $\hat\gamma$ is
interpreted as the momentum distribution of the gas, while $\alpha$
(the inverse Fourier transform of $\hat\alpha$) is the Cooper pair
wave function. Note that there are no spin variables in $\Gamma$; the
full, spin dependent Cooper pair wave function is the product of
$\alpha(x-y)$ with an antisymmetric spin singlet.

The \emph{BCS functional} $\mathcal{F}_T^V$, whose infimum over all
states $\Gamma$ describes the negative of the pressure of the system,
is given as 
\begin{equation}
  \label{eq:F_T}
  \begin{split}
    \mathcal{F}_T^V(\Gamma) =& \int_{\mathbb{R}^3} (p^2 -
    \mu)\hat{\gamma}(p) \ddd{3}p +\int_{\mathbb{R}^3} |\alpha(x)|^2
    V(x) \ddd{3}x - T S(\Gamma),
  \end{split}
\end{equation}
where
\begin{equation*}
  S(\Gamma) = -\int_{\mathbb{R}^3} \tr_{\mathbb{C}^2} \big(\Gamma(p) \ln \Gamma(p)\big) \ddd{3}p
\end{equation*}
is the entropy of the state $\Gamma$. The functional \eqref{eq:F_T}
can be obtained by restricting the many-body problem on Fock space to
translation-invariant and spin-rotation invariant quasi-free states, and dropping the direct and exchange term in the interaction energy, 
see \cite[Appendix A]{HHSS} and \cite{BLS}.

The \emph{normal state} $\Gamma_0$ is the minimizer of the functional \eqref{eq:F_T}
restricted to states with $\alpha=0$. It is given by
\begin{equation*}
  \hat{\gamma}_0(p) = 
  \frac{1}{1+\ee^{\frac{p^2 - \mu}{T}}}.
\end{equation*}
The system is said to be in a 
superfluid phase if and only if the minimum of $\mathcal{F}_T^V$ is
not attained at a normal state,  and we call a normal state $\Gamma_0$
\emph{unstable} in this case. 

\bigskip 

In a previous work \cite{HHSS} we thoroughly studied the functional \eqref{eq:F_T}.
It is not difficult to see that this functional has a (not necessarily unique) minimizer $(\gamma,\alpha)$.
More difficult is the question under which circumstances it is possible to guarantee that $\alpha$ does not vanish.
Such a non-vanishing $\alpha$ in fact describes a macroscopic coherence of pairs such that the system displays a superfluid behavior. 
The corresponding Euler-Lagrange equations for $\gamma$ and $\alpha$ can 
be equivalently expressed via $\Delta = 2 (2\pi)^{-3/2} \hat V \ast \hat \alpha$ in the form of the BCS {\em gap equation}
\begin{equation}\label{bcseintro}
\Delta(p) = -\frac 1{(2\pi)^{3/2}} \int_{\mathbb{R}^3} \hat V(p-q)
\frac{\Delta(q)}{E_\mu^\Delta(q)} \tanh \frac{E_\mu^\Delta(q)}{2T} \, \ddd{3}q
\end{equation}
with $E_\mu^\Delta(p)= \sqrt{(p^2-\mu)^2 +
|\Delta(p)|^2}$; here, $\hat V$
denotes the Fourier transform of $V$. The function $\Delta(p)$ is
the order parameter and is related to the wavefunction of the {\em
Cooper pairs}. The equation
\eqref{bcseintro} is highly non-linear; nonetheless,  it is possible 
to show 
 \cite{HHSS} that the existence of a non-trivial solution to
\eqref{bcseintro} at some temperature $T$ is equivalent to the fact
that a certain {\em linear operator} has a negative eigenvalue. For
$T=0$ this operator is given by the Schr\"odinger-type operator
$|-\Delta - \mu| +  V$. This rather astonishing fact that one
can reduce a non-linear to a linear problem, allowed for a more
thorough mathematical study. Using spectral-theoretic methods, the class of potentials leading
to a non-trivial solution for \eqref{bcseintro} has  been 
precisely characterized. For instance, in \cite{FHNS2007} it was shown 
that if  $\int V(x) dx < 0$, then there exists a
critical temperature $T_c(V) > 0$ such that
\eqref{bcseintro} attains a non-trivial solution for all $T <
T_c(V)$, whereas there is no solution for $T \geq
T_c(V)$. Additionally, in \cite{FHNS2007} the
precise asymptotic behavior of $T_c(\lambda V)$ in the small
coupling limit $\lambda \to 0$ was determined; the resulting expression generalizes well-known formulas in the physics literature \cite{Gorkov,NRS} valid only at low density. 
The low density limit $\mu\to 0$ of the critical temperature was studied in \cite{HS-mu}.

%
%

\section{Main Results}
\label{sec:results}

We study the case of short-range interaction
potentials $V_\ell$, with range $\ell$ tending to zero in such a way that $V_\ell$ converges to a contact interaction. Such contact interactions are thoroughly studied in the literature \cite[chap
I.1.2-4]{albeverio} and are known to arise as a one parameter family of self-adjoint extensions of the Laplacian on $\mathbb{R}^3\setminus\{0\}$. The relevant parameter uniquely determining the extension is, in fact, the scattering length, which we assume to be negative, in which case the resulting operator is non-negative, i.e., there are no bound states. In other words, we require that the  scattering length $a(V_\ell)$ converges to a negative value as $\ell \to 0$, i.e., 
$$\lim_{\ell \to 0} a(V_{\ell}) = a < 0.$$ It was pointed out in  
\cite[Equ. (3)]{HS-mu} that the scattering length of any potential $V\in L^1 \cap L^{3/2}$ can be written as
\begin{equation}
  \label{eq:a}
  a(V) = \frac{1}{4\pi} \left< |V|^{1/2} \middle| \tfrac{1}{1+V^{1/2}
    \frac{1}{p^2}|V|^{1/2}} V^{1/2}\right>
\end{equation}
where $V^{1/2}$ is defined by $V^{1/2}= V |V|^{-1/2}$. 
Note that in case of a two-body interaction that does allow bound states the system would display 
features of a Bose-Einstein condensate of fermion pairs in the low density limit, see \cite{randeria, zwerger-1992, Pieri-Strinati,HS, HSch}. 

In order to obtain a non-vanishing limit of the sequence of scattering lengths $a(V_\ell)$, we have to adjust the sequence of potentials so that 
the corresponding Schr\"odinger operator  just barely fails to have a bound state. 
We shall follow the method of \cite[chap
I.1.2-4]{albeverio} and first choose a potential $V$ such that $p^2 + V$ is
non-negative and has a simple
zero-energy resonance. Equivalently this means that  the corresponding Birman-Schwinger operator $$V^{1/2}\tfrac{1}{p^2}|V|^{1/2}$$ has $-1$ as lowest, simple, eigenvalue. 
Next we scale this potential and multiply it by a factor $\lambda(\ell) < 1$, such that the corresponding $V_\ell$ 
does no longer have a zero resonance, but a negative scattering length. 
Recall that a potential with zero resonance has an infinite scattering length. 

To be precise, we define $V_\ell$ according to
\begin{equation}
  \label{eq:scaling}
  V_\ell(x) = \lambda(\ell) \ell^{-2} V(\tfrac{x}{\ell}),
\end{equation}
where
$\lambda(0) = 1$, $\lambda < 1$ for all $\ell>0$ and
$1-\lambda(\ell) = O(\ell)$.  
We are interested in the limit $\ell \to 0$ meaning that the range of the potential converges to zero.  
This scaling essentially leaves the $L^{3/2}$ norm of $V$
invariant, but  the $L^1$ norm vanishes linearly in $\ell$.
This is a major difference to the work 
\cite{BHS} where we allowed the point interaction to  be approximated by a
sequence $V_\ell$, whose $L^1$ norm converges to a positive number, with its $L^{3/2}$-norm even diverging. 
This required a new approach in the proof and led to a more general statement about contact interactions \cite{BHS2}. 
In the case considered here 
it suffices to rely on results of \cite[chap
I.1.2-4]{albeverio}.

Our main objective now is to consider the solution $\Delta_\ell$ of the BCS gap-equation \eqref{bcseintro}, coming from a minimizer of the functional ${ \mathcal{F}}_T^{V_\ell}$,
and to show that in the limit where the range $\ell$ goes to zero, i.e., the potentials $V_\ell$ tend to a contact interaction, 
the order parameter $\Delta_\ell$ converges to a constant function $\Delta$, which satisfies the simplified equation \eqref{se}. 
This equation appears throughout the physics literature as the one describing superfluid systems. 

\medskip 

It is obvious that the critical temperature $T_c$ is defined as the temperature where $\Delta=0$ satisfies the equation \eqref{se}.
More precisely:
\begin{definition}[Critical temperature]
  \label{def:tc}
  Let $\mu >0$.  The \emph{critical temperature} $T_c$ corresponding to the scattering length $a<0$ is given by the
  equation
  \begin{equation}
    \label{eq:T_c}
    -\frac{1}{4\pi a} = \frac{1}{(2\pi)^3}\int_{\mathbb{R}^3}
    \left(
      \frac{\tanh\big(\frac{p^2-\mu}{2T_c}\big)}{p^2-\mu}
      -\frac{1}{p^2} \right) \ddd{3}p.
  \end{equation}
\end{definition}
Since the function
$$ T \mapsto  \int_{\mathbb{R}^3}
    \left(
      \frac{\tanh\big(\frac{p^2-\mu}{2T}\big)}{p^2-\mu}
      -\frac{1}{p^2} \right) \ddd{3}p
$$
is strictly monotone in $T$ the critical temperature $T_c$ is unique. 

As our main theorem we reproduce the BCS gap equation 
for contact interactions, see, e. g., \cite[Eq.~(10)]{Leggett}, \cite[Eq.~(7)]{randeria}.
\begin{theorem}[Effective Gap equation]
  \label{thm:gap_eff}
  Let $T \geq 0$, $\mu\in \mathbb{R}$ and assume $V \in L^{3/2}({\mathbb{R}^3}) \cap  L^{1}({\mathbb{R}^3})$ and $|x| V(x) \in  L^1({\mathbb{R}^3}) \cap L^{2}({\mathbb{R}^3}) $. Let further
  $(\hat{\gamma}_\ell,\hat{\alpha}_\ell)$ be a minimizer of
  $\mathcal{F}_T^{V_\ell}$ with corresponding $\Delta_\ell = 2
  (2\pi)^{-3/2}\hat V_\ell * \hat\alpha_\ell$. Then there exist
  $\Delta \geq 0$ such that
  $|\Delta_\ell(p)| \to \Delta$ pointwise as $\ell \to 0$.
  If $\Delta\neq 0$ then it satisfies the equation 
  \begin{equation}
    \label{eq:gap_eff}
    \boxed{
      -\frac{1}{4\pi a} =
      \frac{1}{(2\pi)^3}\int_{\mathbb{R}^3}\left(\frac{1}{K_{T,\mu}^{\Delta}}
        -\frac{1}{p^2}\right) \ddd{3}p\,,
    }
  \end{equation}
  where we use the abbreviations
\begin{equation*}
  K_{T,\mu}^{\Delta}(p) =
  \frac{E_{\mu}^{\Delta}(p)}{\tanh\big(\frac{E_{\mu}^{\Delta}(p)}{2T}\big)}\,, \quad 
  E_{\mu}^{\Delta}(p) = \sqrt{(p^2 -
    \mu)^2 + |\Delta|^2}\,.
 \end{equation*}
Furthermore,  the limiting  $\Delta$ does not vanish if and only if $T < T_c$.
 \end{theorem}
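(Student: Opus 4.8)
The plan is to recast the gap equation \eqref{bcseintro} as a Birman--Schwinger eigenvalue problem, use the scaling \eqref{eq:scaling} to pass to the zero-range limit, and read off \eqref{eq:gap_eff} as the resulting solvability condition. First I would use that a minimizer obeys $\hat\alpha_\ell=-\Delta_\ell/(2K_{T,\mu}^{\Delta_\ell})$ together with $\Delta_\ell=2(2\pi)^{-3/2}\hat V_\ell\ast\hat\alpha_\ell$. Introducing $u_\ell:=|V_\ell|^{1/2}\F^{-1}[\Delta_\ell/K_{T,\mu}^{\Delta_\ell}]$ and $V_\ell^{1/2}:=\sgn(V_\ell)|V_\ell|^{1/2}$, this pair of relations is equivalent to the eigenvalue equation
\[ |V_\ell|^{1/2}\frac{1}{K_{T,\mu}^{\Delta_\ell}}V_\ell^{1/2}\,u_\ell=-u_\ell, \]
supplemented by $\Delta_\ell(p)=-(2\pi)^{-3/2}\int_{\mathbb R^3}V_\ell^{1/2}(x)u_\ell(x)\,\ee^{-\mathrm i px}\ddd{3}x$. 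Thus $-1$ is an eigenvalue of $B_\ell:=|V_\ell|^{1/2}(K_{T,\mu}^{\Delta_\ell})^{-1}V_\ell^{1/2}$, and the magnitude of $u_\ell$ (hence of $\Delta_\ell$) is fixed by the nonlinear self-consistency, since $K_{T,\mu}^{\Delta_\ell}$ itself depends on $\Delta_\ell$.

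Next I would rescale. Conjugating $B_\ell$ by the unitary dilation $x\mapsto\ell x$ turns it into $\lambda(\ell)\,|V|^{1/2}M_\ell V^{1/2}$, where $M_\ell$ is the Fourier multiplier $M_\ell(q)=(\ell^2 K_{T,\mu}^{\Delta_\ell}(q/\ell))^{-1}$. Since $\ell^2K_{T,\mu}^{\Delta_\ell}(q/\ell)\to q^2$ as $\ell\to0$---uniformly in the a priori bounded $\Delta_\ell$, because $K_{T,\mu}^{\Delta}\ge|p^2-\mu|$---the rescaled operator converges in Hilbert--Schmidt norm, using $V\in L^{3/2}\cap L^1$, to $B_0:=|V|^{1/2}\tfrac1{p^2}V^{1/2}$. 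By the resonance hypothesis on $V$, the operator $B_0$ has $-1$ as a simple, isolated eigenvalue, with eigenfunction $\chi=|V|^{1/2}\phi$, where $\phi$ is the zero-energy resonance of $p^2+V$. Simplicity and isolation force the corresponding spectral projections to converge, so the rescaled eigenfunctions $w_\ell:=\ell^2u_\ell(\ell\,\cdot)$ converge to a multiple of $\chi$, the multiple being proportional to the limiting amplitude. Inserting this into the formula for $\Delta_\ell$ gives $\Delta_\ell(p)=-\lambda(\ell)^{1/2}\,\widehat{V^{1/2}w_\ell}(\ell p)$, and since $\ell p\to0$ for fixed $p$ this converges to the $p$-independent value $-\widehat{V^{1/2}w}(0)$; taking absolute values proves $|\Delta_\ell(p)|\to\Delta$. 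Continuity of these Fourier transforms at the origin, and the uniformity needed to evaluate them at $\ell p$, is exactly where the hypotheses $|x|V\in L^1\cap L^2$ enter.

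The equation \eqref{eq:gap_eff} itself I would extract by a Lyapunov--Schmidt reduction. Writing the rescaled eigenvalue equation as $(\lambda B_0+R_\ell)w_\ell=-w_\ell$ with $R_\ell=\lambda|V|^{1/2}(M_\ell-p^{-2})V^{1/2}$, and projecting onto $\chi$ (the transversal part being controlled by the invertibility of $1+B_0$ on $\{\chi\}^\perp$), yields the scalar solvability condition
\[ \frac{\langle\chi,|V|^{1/2}(M_\ell-p^{-2})V^{1/2}\chi\rangle}{\|\chi\|^2}=-\frac{1-\lambda(\ell)}{\lambda(\ell)}+o(\ell). \]
Substituting $q=\ell s$, the left-hand side equals, to leading order, $\ell$ times a $V$-dependent constant built from $\widehat{V^{1/2}\chi}(0)$ and $\|\chi\|^2$, multiplied by $\int_{\mathbb R^3}(1/K_{T,\mu}^{\Delta}-1/p^2)\ddd{3}p$, while the high momenta contribute only $O(\ell^2)$. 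Dividing by $\ell$, letting $\ell\to0$, and inserting the resonance expansion of the scattering-length formula \eqref{eq:a}---which expresses $a$ through the very same matrix elements of $\chi$ together with the coefficient $\lim_{\ell\to0}(1-\lambda(\ell))/\ell$---makes the $V$-dependent constants cancel, leaving the universal relation \eqref{eq:gap_eff}.

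Finally, the dichotomy follows from monotonicity: the right-hand side of \eqref{eq:gap_eff} is continuous and strictly decreasing in $\Delta\ge0$, and at $\Delta=0$ it reduces to the integral in \eqref{eq:T_c}, which by the strict monotonicity in $T$ stated after Definition~\ref{def:tc} exceeds $-1/(4\pi a)$ precisely when $T<T_c$. Hence a positive solution exists if and only if $T<T_c$, and the limit $\Delta$ must coincide with it. The step I expect to be the main obstacle is the interaction of the nonlinearity with the perturbative expansion: one must show that replacing $K_{T,\mu}^{\Delta_\ell}$ by $K_{T,\mu}^{\Delta}$ in the solvability condition costs only $o(\ell)$ while $\Delta$ is itself only determined a posteriori, and that the amplitude of $w_\ell$---equivalently, whether $\Delta$ vanishes---is correctly pinned down, i.e.\ that the leading correction is genuinely of order $\ell$ and matches $1-\lambda(\ell)$ with no hidden cancellation. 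Securing these error bounds uniformly in $\ell$, hand in hand with the Fourier-continuity estimates that require $|x|V\in L^1\cap L^2$, is the technical heart of the argument.
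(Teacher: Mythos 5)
Your skeleton is genuinely close to the paper's: both arguments rewrite the minimizer's gap equation via the Birman--Schwinger principle as the statement that $V_\ell^{1/2}(K_{T,\mu}^{\Delta_\ell})^{-1}|V_\ell|^{1/2}$ (you work with its adjoint) has eigenvalue $-1$, both split off the singular $\tfrac{1}{p^2}$ part plus a term carrying $m_\mu^{\Delta_\ell}(T)$, and both close with the scattering-length formula \eqref{eq:a}. The execution differs, though: the paper never expands anything. It factors out $1+V_\ell^{1/2}\tfrac{1}{p^2}|V_\ell|^{1/2}$, observes that a rank-one operator must then have eigenvalue $-1$, takes a trace to get the exact identity \eqref{trw}, and via \eqref{eq:a} and the resolvent identity obtains $4\pi a(V_\ell)+1/m_\mu^{\Delta_\ell}(T)=O(\ell^{1/2})$, so all $V$-dependent constants cancel identically. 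Your rescaled Lyapunov--Schmidt route instead needs the constants to cancel asymptotically against a resonance expansion of $a(V_\ell)$, which in particular requires $\lim_{\ell\to 0}(1-\lambda(\ell))/\ell$ to exist --- more than the stated hypothesis $1-\lambda(\ell)=O(\ell)$.

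There are two genuine gaps. First, every quantitative step in your proposal presupposes a priori bounds that the Euler--Lagrange equations alone cannot supply, and you never use the minimizing property beyond those equations. The relation $\hat\alpha_\ell=\Delta_\ell/(2K_{T,\mu}^{\Delta_\ell})$ gives only the pointwise bound $|\hat\alpha_\ell|\le \tfrac12$; it does not give $\|\hat\alpha_\ell\|_2$ bounded, nor boundedness of the amplitude of $\Delta_\ell$, nor the uniform integrability of $|1/K_{T,\mu}^{\Delta_\ell}-1/p^2|$ needed to make $m_\mu^{\Delta_\ell}(T)$ finite and to make your remainder $R_\ell$ (the paper's $A_{\mu,T,\ell}$) small in Hilbert--Schmidt norm. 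In the paper all of this flows from Lemma~\ref{lemma:minimizer}, a relative-entropy lower bound on $\mathcal{F}_T^{V_\ell}$ uniform in $\ell$: it yields uniform bounds on $\int |p|^b|\hat\alpha_\ell|^2$ and on $\hat\gamma_\ell-\hat\gamma_0$, whence the amplitude bound (via the $L^2$ bound on $\Delta_\ell/E_\mu^{\Delta_\ell}$ and the contradiction argument \eqref{eq:lim_P}), the convergence $m_\mu^{\Delta_\ell}\to m_\mu^{\Delta}$ (via \eqref{KT} and \eqref{behgamma}), and the bound \eqref{eq:A_q_bound}. Writing ``uniformly in the a priori bounded $\Delta_\ell$'' is circular without this variational input; identifying and proving such a bound is the missing idea, not a routine technicality.

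Second, your Lyapunov--Schmidt projection is set up as if $B_0$ were self-adjoint, but the paper allows sign-changing $V$ (hence the definition $V^{1/2}=V|V|^{-1/2}$), and then $B_0=|V|^{1/2}\tfrac{1}{p^2}V^{1/2}$ is not self-adjoint. The Riesz projection onto the simple eigenvalue $-1$ pairs the right eigenvector $\chi$ with the \emph{left} eigenvector $\sgn(V)\chi$, so the solvability condition must read $\langle\sgn(V)\chi|R_\ell\chi\rangle/\langle\sgn(V)\chi|\chi\rangle$ rather than $\langle\chi|R_\ell\chi\rangle/\|\chi\|^2$, and $1+B_0$ is to be inverted on the complementary spectral subspace, not on $\{\chi\}^\perp$. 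This is not cosmetic: the pairing $\langle\sgn(V)\phi|\phi\rangle$ is exactly the quantity appearing in the paper's Remark for $\lim_{\ell\to 0}a(V_\ell)$, and the advertised cancellation of $V$-dependent constants only comes out with the correct left/right pairing. (Minor further points: your $O(\ell^2)$ estimate for the oscillatory correction should be $O(\ell^{3/2})$ as in \eqref{eq:A_q_bound}, which still suffices; and your closing monotonicity argument for the $T<T_c$ dichotomy is the same as the paper's, including the same implicit step that the normal state is not the minimizer for small $\ell$ when $T<T_c$.)
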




\section{Proofs}
\label{sec:proofs}


Recall that we chose $V$ so that $V^{1/2}\tfrac{1}{p^2}|V|^{1/2}$ has $-1$ as lowest simple eigenvalue, i.e., 
there is a unique $\phi$, with 
$$ \left( V^{1/2}\tfrac{1}{p^2}|V|^{1/2} + 1\right) \phi =0.$$ 
With 
 $U_\ell$ denoting the unitary operator $(U_\ell\varphi)(x) =
\ell^{-3/2}\varphi(\frac{x}{\ell})$,
we can rewrite $$ V_\ell (x) = \lambda(\ell) \ell^{-2} V(\tfrac{x}{\ell}) = \lambda(\ell) \ell^{-2}  U_\ell V U^{-1}_\ell. $$
Since $U_\ell p^2 U^{-1}_\ell = \ell^2 p^2$, it is easy to see that 
$$ U_\ell V^{1/2} \frac 1 {p^2} |V|^{1/2} U_\ell^{-1} = \frac{1}{\lambda(\ell)} V_\ell^{1/2} \frac 1 {p^2} |V_\ell|^{1/2}.$$
Denoting $\phi_\ell = U_\ell \phi$, this implies 
\begin{equation}\label{Xell}  V_\ell^{1/2} \frac 1 {p^2} |V_\ell|^{1/2} \phi_\ell = \lambda(\ell) U_\ell V^{1/2} \frac 1 {p^2} |V|^{1/2} \phi = - \lambda(\ell) \phi_\ell.
\end{equation}
showing that the lowest eigenvalue of 
$1 + V_\ell^{1/2} \frac 1 {p^2} |V_\ell|^{1/2}$ is $1 - \lambda(\ell) = O(\ell)$.  
Moreover, note that 
\begin{equation}
  \label{eq:V:p}
  \|V_\ell\|_p = \lambda(\ell) \ell^{3/p-2}\| V\|_p
\end{equation}
for $p\geq 1$. 
\begin{remark}
  In \cite[Appendix A.1]{BHS}, we show that the scattering length corresponding to
  the potential $V_\ell$, indeed, converges to the negative value
$$
  \lim_{\ell\to 0}a(V_\ell) = -\frac{1}{\lambda'(0)} \frac{|\langle
    |V|^{1/2}| \phi\rangle|^2}{\langle \sgn(V) \phi|\phi\rangle} < 0.$$
\end{remark}

In the next Lemma, we derive a lower bound for the BCS functional which is uniform in $\ell$.
This will allow us to obtain limits for the order parameter $\Delta_\ell$.

\begin{lemma}
  \label{lemma:minimizer}
  There exists $C_1 > 0$, independent of $\ell$,  such that 
  \begin{equation}
    \label{eq:F_T_bound}
    \mathcal{F}_T^{V_\ell}(\Gamma)
    \geq -C_1
    + \frac{1}{2}\int_{\mathbb{R}^3} (1+p^2)(\hat{\gamma} - \hat{\gamma}_0)^2 \ddd{3}p
    + \frac{1}{2}\int_{\mathbb{R}^3} |p|^b |\hat{\alpha}|^2 \ddd{3}p,
  \end{equation}
  for $0 \leq b < 1$, 
  where we denote $\hat{\gamma}_0(p) =
  \frac{1}{1+\ee^{(p^2-\mu)/T}}$.
\end{lemma}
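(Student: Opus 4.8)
The plan is to bound $\mathcal{F}_T^{V_\ell}$ from below by comparing with the normal state $\Gamma_0$ and then absorbing the (possibly negative) interaction into the kinetic energy via the Birman--Schwinger principle. Writing $\mathcal{F}_T^0$ for the functional at $V=0$, I would first split
\[
\mathcal{F}_T^{V_\ell}(\Gamma) = \big[\mathcal{F}_T^0(\Gamma)-\mathcal{F}_T^0(\Gamma_0)\big] + \mathcal{F}_T^0(\Gamma_0) + \langle \alpha, V_\ell\, \alpha\rangle,
\]
where $\langle\alpha,V_\ell\alpha\rangle=\int_{\mathbb{R}^3}|\alpha(x)|^2V_\ell(x)\ddd{3}x$ and $\mathcal{F}_T^0(\Gamma_0)$ is a finite constant, since $\hat\gamma_0$ decays like $\ee^{-(p^2-\mu)/T}$. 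It thus suffices to lower bound the relative entropy (the bracket) and the interaction form.

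For the bracket I would use that $\Gamma_0=(1+\ee^{H_0/T})^{-1}$ with $H_0=\mathrm{diag}(p^2-\mu,-(p^2-\mu))$ is the pointwise minimizer of the $V=0$ functional. Convexity of $x\mapsto x\ln x$ then gives the standard entropy estimate (as in \cite{HHSS})
\[
\mathcal{F}_T^0(\Gamma)-\mathcal{F}_T^0(\Gamma_0) \geq \tfrac12\int_{\mathbb{R}^3}(1+p^2)(\hat\gamma-\hat\gamma_0)^2\ddd{3}p + \int_{\mathbb{R}^3}K_{T,\mu}^0(p)\,|\hat\alpha(p)|^2\ddd{3}p,
\]
with $K_{T,\mu}^0(p)=(p^2-\mu)/\tanh\big(\tfrac{p^2-\mu}{2T}\big)$. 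The diagonal term comes from the Hessian of the relative entropy in the $\hat\gamma$--direction, which grows like $\cosh^2\!\big(\tfrac{p^2-\mu}{2T}\big)$ and hence dominates the weight $(1+p^2)$; the off-diagonal second variation is exactly $K_{T,\mu}^0$, so the coefficient of $|\hat\alpha|^2$ is precisely this factor.

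Combining the two steps, the task reduces to the operator bound $K_{T,\mu}^0 + V_\ell \geq \tfrac12|p|^b - C$, uniformly in $\ell$. Here I would feed in the scaling \eqref{Xell}: the Birman--Schwinger operator $V_\ell^{1/2}p^{-2}|V_\ell|^{1/2}$ has lowest eigenvalue $-\lambda(\ell)$, so $p^2+V_\ell\geq 0$ and the spectral gap to the resonance threshold is $1-\lambda(\ell)=O(\ell)$. I would replace $p^2$ by $A_b:=p^2-|p|^b$ above a fixed large cutoff $\Lambda$ (and $A_b=p^2$ below), and prove $A_b+V_\ell\geq 0$ through
\[
V_\ell^{1/2}A_b^{-1}|V_\ell|^{1/2} = V_\ell^{1/2}p^{-2}|V_\ell|^{1/2} + V_\ell^{1/2}\big(A_b^{-1}-p^{-2}\big)|V_\ell|^{1/2}.
\]
Conjugating by the dilation $U_\ell$, the correction becomes $\lambda(\ell)\,\ell^{2-b}$ times a fixed operator cut off at momentum $\ell\Lambda$, whose norm grows like $(\ell\Lambda)^{b-1}$ from the infrared, so the correction is $O\big(\Lambda^{b-1}\ell\big)$. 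Since $b<1$ we have $\Lambda^{b-1}\to 0$, so choosing $\Lambda$ large makes the correction smaller than the gap $1-\lambda(\ell)$; the lowest eigenvalue then stays above $-1$ and $A_b+V_\ell\geq 0$, i.e.\ $V_\ell\geq -A_b$. Because $K_{T,\mu}^0(p)\geq p^2-\mu$, this yields $K_{T,\mu}^0+V_\ell\geq K_{T,\mu}^0-A_b\geq |p|^b-C$ as quadratic forms. Finally I would pass from $\langle\alpha,(K_{T,\mu}^0+V_\ell)\alpha\rangle\geq\int(|p|^b-C)|\hat\alpha|^2$ to the stated inequality by splitting at $|p|=R$ with $R^b=2C$: on $|p|>R$ the integrand dominates $\tfrac12|p|^b|\hat\alpha|^2$, while on the ball $|p|\leq R$ the constraint $0\leq\Gamma\leq\mathds{1}$ forces $|\hat\alpha|^2\leq\hat\gamma(1-\hat\gamma(-p))\leq\tfrac14$, so $\int_{|p|\leq R}|\hat\alpha|^2$ is bounded by a constant and all remaining terms are absorbed into $-C_1$.

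The main obstacle is exactly this interaction estimate. Since the resonance gap $1-\lambda(\ell)=O(\ell)$ closes as $\ell\to 0$, the full kinetic energy is needed just to keep $p^2+V_\ell$ nonnegative, and only a fractional surplus $|p|^b$ can be retained; the borderline balance between the dilation factor $\ell^{2-b}$ and the infrared growth $(\ell\Lambda)^{b-1}$ is what forces $b<1$, the role of the large cutoff $\Lambda$ being to beat the resonance gap uniformly. Controlling the infrared part of the correction rigorously — using $\|V_\ell\|_1=O(\ell)$ together with the moment hypotheses $|x|V\in L^1\cap L^2$ — is the delicate point; the relative-entropy bound in the second step is by now routine.
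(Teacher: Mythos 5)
Your proposal tracks the paper's own proof closely: both decompose $\mathcal{F}_T^{V_\ell}(\Gamma)-\mathcal{F}_T^{V_\ell}(\Gamma_0)$ into relative entropy plus the interaction term, invoke the known relative-entropy lower bound (the paper cites \cite{FHSS-micro_ginzburg_landau} and \cite{HLS2008}; your $K_{T,\mu}^0$ version is the same estimate), reduce everything to the uniform operator inequality $p^2+V_\ell\geq|p|^b-C_1$ via a Birman--Schwinger argument, and finish using the constraint $|\hat\alpha|^2\leq 1$ on a bounded momentum region. The paper delegates the Birman--Schwinger inequality to \cite{BHS}; you attempt to supply it, and your quantitative core is right: with $A_b=p^2-|p|^b\chi_{|p|>\Lambda}$ the correction $Y_\ell=|V_\ell|^{1/2}(A_b^{-1}-p^{-2})|V_\ell|^{1/2}$ is positive with Hilbert--Schmidt norm $O(\ell\,\Lambda^{b-1})$, because $\|V_\ell\|_1=O(\ell)$ and $\int_{|p|>\Lambda}|p|^{b-4}\ddd{3}p=O(\Lambda^{b-1})$ (note this needs only $V\in L^1$, not the moment conditions you invoke at the end).

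There is, however, a genuine hole in the spectral step ``correction smaller than the gap $1-\lambda(\ell)$, hence the lowest eigenvalue stays above $-1$''. The Birman--Schwinger operator $V_\ell^{1/2}A_b^{-1}|V_\ell|^{1/2}=J(X_\ell+Y_\ell)$, with $J=\sgn(V_\ell)$ and $X_\ell=|V_\ell|^{1/2}p^{-2}|V_\ell|^{1/2}$, is not self-adjoint, and for such operators eigenvalues are \emph{not} stable under norm-small perturbations: for $J=\bigl(\begin{smallmatrix}1&0\\0&-1\end{smallmatrix}\bigr)$ and $X=\bigl(\begin{smallmatrix}1&1\\1&1\end{smallmatrix}\bigr)\geq 0$, the operator $JX$ is nilpotent with sole eigenvalue $0$, yet adding $Y=\epsilon\bigl(\begin{smallmatrix}0&0\\0&1\end{smallmatrix}\bigr)$ creates an eigenvalue $\approx-\sqrt{\epsilon}$, far below $-\|Y\|$. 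So your conclusion does not follow from $\|Y_\ell\|<1-\lambda(\ell)$ alone; the square-root scenario is excluded here only because the eigenvalue $-\lambda(\ell)$ of $JX_\ell$ is non-defective, which rests on $\langle\sgn(V)\phi|\phi\rangle\neq 0$ (the denominator in the paper's Remark). The cleanest repair uses the very identity the paper employs in Proposition~\ref{prop:gap_eff}: write $1+J(X_\ell+tY_\ell)=(1+JX_\ell)\bigl(1+t(1+JX_\ell)^{-1}JY_\ell\bigr)$, bound $\|(1+JX_\ell)^{-1}\|\leq 1+\|X_\ell\|/(1-\lambda(\ell))$ via $(1+JX)^{-1}=1-JX^{1/2}(1+X^{1/2}JX^{1/2})^{-1}X^{1/2}$, and run a continuity argument in $t\in[0,1]$ (the symmetrized operators $(X_\ell+tY_\ell)^{1/2}J(X_\ell+tY_\ell)^{1/2}$ carry the same nonzero spectrum) to conclude that no spectrum can cross $-1$ as long as $\|Y_\ell\|\bigl(1+\|X_\ell\|/(1-\lambda(\ell))\bigr)<1$. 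This is the same order of smallness you aimed for, so choosing $\Lambda$ large still closes the argument; but note that any version of this step requires the \emph{lower} bound $1-\lambda(\ell)\geq c\,\ell$, which does not follow from the stated hypothesis $1-\lambda(\ell)=O(\ell)$ --- it is forced instead by the standing assumption that $a(V_\ell)$ converges to a finite negative value (the Remark's formula requires $0<\lambda'(0)<\infty$). You use this implicitly and should state it.
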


\begin{proof}
  For details of the proof of this  Lemma we refer to \cite[Lemma 3]{BHS}.  
  The main observation in the proof is that we may express the difference
  $\mathcal{F}_T^{V_\ell}(\Gamma) - \mathcal{F}_T^{V_\ell}(\Gamma_0)$ as
  \begin{equation*}
    \mathcal{F}_T^{V_\ell}(\Gamma) -
    \mathcal{F}_T^{V_\ell}(\Gamma_0)
    = \frac{T}{2}\mathcal{H}(\Gamma,\Gamma_0) + \int_{\mathbb{R}^3} V_\ell(x) |\alpha(x)|^2 \ddd{3}x,
  \end{equation*}
  where
  $\mathcal{H}(\Gamma,\Gamma_0)$ is the relative entropy of $\Gamma$ and $\Gamma_0$. By means of 
  \cite[Lemma~3]{FHSS-micro_ginzburg_landau}, which is an extension of
   \cite[Theorem 1]{HLS2008},  giving a bound on the relative entropy 
   one obtains 
 \begin{align*}
     \mathcal{F}_T^{V_\ell}(\Gamma) -
     \mathcal{F}_T^{V_\ell}(\Gamma_0) &\geq \left\langle
     {\alpha} \left| p^2+V_\ell-\mu \right| {\alpha} \right\rangle \\
     &\quad +  \frac{1}{2}\int_{\mathbb{R}^3} (1+p^2)(\hat{\gamma} -
     \hat{\gamma}_0)^2 \ddd{3}p -C
   \end{align*}
   for an appropriate constant $C$. 
   By means of a Birman-Schwinger type argument one can further show that
   $$ p^2 + V_\ell \geq |p|^b - C_1\,,$$
   uniformly in $\ell$ for $0\leq b < 1$ and an appropriate $C_1$,
   which,  together with the constraint $|\hat \alpha  |^2 \leq 1$, then implies the statement.

\end{proof}

It was shown in \cite[Theorem 1]{HHSS} that the functional $\F^{V_\ell}_T$ attains a minimizer $(\gamma_\ell,\alpha_\ell)$ for each $V_\ell$. 
Lemma \ref{lemma:minimizer}, with $\hat \alpha_\ell (p) \leq 1$,  immediately tells us that the terms $$\int_{\mathbb{R}^3} (1 + |p|^b) |\hat \alpha_\ell(p)| \ddd{3} p \qquad {\rm and} \qquad \int_{\mathbb{R}^3} (1+|p|^2) (\hat \gamma_\ell - \hat \gamma_0)^2 \ddd{3} p$$ are uniformly bounded in $\ell$.
Let us further  mention the following useful relations between 
$\Delta_\ell$ and the minimizer $(\gamma_\ell, \alpha_\ell)$,  
  \begin{equation}
    \label{eq:el_gamma}
    \hat{\gamma_\ell} = \frac{1}{2} - \frac{1}{2}
    \frac{p^2-\mu}{K_{T,\mu}^{\Delta_\ell}},\qquad
    \Delta_\ell = 2K_{T,\mu}^{\Delta_\ell} \hat{\alpha_\ell},
  \end{equation}
  which follow from the corresponding Euler-Lagrange equations. 
  One immediate consequence of these relations and Lemma \ref{lemma:minimizer} is the uniform boundedness of 
$\int_{\mathbb{R}^3}  \hat{\gamma}_\ell(p) |p|^b \ddd{3}p$, 
which implies 
\begin{equation}\label{behgamma}
\lim_{R\to \infty} \lim_{\ell \to 0} \int_{|p|^2 \geq R} \hat \gamma_\ell(p)  \ddd{3}p = 0.
\end{equation}
 In the following lemma we show that, as $\ell\to 0$, pointwise limits
for the main quantities exist.
To this aim we introduce the notation
\begin{equation*}
  m_\mu^{\Delta_\ell}(T) =
  \frac{1}{(2\pi)^3}\int_{\mathbb{R}^3}\left(\frac{1}{K_{T,\mu}^{\Delta_\ell}}
  -\frac{1}{p^2}\right) \ddd{3}p\,.
\end{equation*}

\begin{lemma}
  \label{lemma:convergence}
  Let $(\gamma_\ell,\alpha_\ell)$ be a sequence of minimizers of
  $\mathcal{F}_T^{V_\ell}$ and
  $\Delta_\ell = \frac{2}{(2\pi)^{3/2}} \hat{V}_\ell * \hat{\alpha}_\ell$. Then there is a subsequence of $\Delta_\ell$, which we continue to denote by $\Delta_\ell$, 
 and a
  $\Delta \in \mathbb{R}_+$ such that
  \begin{enumerate}[label=(\roman*)]
  \item $|\Delta_\ell(p)|$ converges pointwise to the constant
    function $\Delta$ as $\ell\to 0$,
    \label{lemma:convergence:Delta}
  \item $\displaystyle\lim_{\ell\to 0}
    m_\mu^{\Delta_\ell}(T) = m_\mu^{\Delta}(T)$. \label{lemma:convergence:m}
  \end{enumerate}
\end{lemma}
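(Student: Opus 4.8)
The plan is to prove the two assertions in turn: first extract a subsequence along which $|\Delta_\ell|$ converges pointwise and show that the limit must be a constant, then pass to the limit in $m_\mu^{\Delta_\ell}(T)$ by dominated convergence. The starting point for both is the self-consistent equation obtained by inserting $\hat\alpha_\ell = \Delta_\ell/(2K_{T,\mu}^{\Delta_\ell})$ from \eqref{eq:el_gamma} into $\Delta_\ell = 2(2\pi)^{-3/2}\hat V_\ell * \hat\alpha_\ell$, namely
\begin{equation*}
  \Delta_\ell(p) = \frac{1}{(2\pi)^{3/2}} \int_{\mathbb{R}^3} \hat V_\ell(p-q)\, \frac{\Delta_\ell(q)}{K_{T,\mu}^{\Delta_\ell}(q)} \ddd{3}q ,
\end{equation*}
combined with the scaling identity $\hat V_\ell(p) = \lambda(\ell)\,\ell\,\hat V(\ell p)$, which follows from \eqref{eq:scaling}. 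The decisive structural feature is that, as $\ell \to 0$, the kernel $\hat V_\ell(p-q) = \lambda(\ell)\ell\,\hat V(\ell(p-q))$ loses its dependence on $p$, and this is exactly what forces $\Delta_\ell$ towards a constant.

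For part \ref{lemma:convergence:Delta} I would first record a uniform $L^\infty$ bound on $\Delta_\ell$. Writing $\Delta_\ell = 2\,\widehat{V_\ell \alpha_\ell}$ and applying H\"older with the scale-invariant norm $\|V_\ell\|_{3/2} = \lambda(\ell)\|V\|_{3/2}$ from \eqref{eq:V:p} gives $\|\Delta_\ell\|_\infty \le C \|\alpha_\ell\|_3$, and the required bound on $\|\alpha_\ell\|_3$ follows by Sobolev embedding from the uniform control of $\int (1+|q|^b)|\hat\alpha_\ell|^2 \ddd{3}q$ provided by Lemma \ref{lemma:minimizer}. Next comes the equicontinuity estimate: subtracting the self-consistent equation at $p_1$ and $p_2$, using $|x|V \in L^1$ to get $\hat V \in C^1$ with $\nabla\hat V$ bounded, and estimating by Cauchy--Schwarz against the same weighted $L^2$ bound, one obtains
\begin{equation*}
  \big||\Delta_\ell(p_1)| - |\Delta_\ell(p_2)|\big| \le \big|\Delta_\ell(p_1) - \Delta_\ell(p_2)\big| \le C\,\omega(\ell)\,|p_1 - p_2| ,
\end{equation*}
with $\omega(\ell) \to 0$ as $\ell \to 0$. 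With the uniform bound, $|\Delta_\ell(0)|$ lies in a compact set, so a subsequence satisfies $|\Delta_\ell(0)| \to \Delta \ge 0$; the equicontinuity estimate then transfers this convergence to every fixed $p$ and shows the limit is the \emph{same} value $\Delta$ for all $p$, i.e.\ a constant.

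For part \ref{lemma:convergence:m} I would apply dominated convergence to $m_\mu^{\Delta_\ell}(T)$. The integrand $1/K_{T,\mu}^{\Delta_\ell}(p) - 1/p^2$ converges pointwise to $1/K_{T,\mu}^{\Delta}(p) - 1/p^2$ by part \ref{lemma:convergence:Delta} and the continuity of $\Delta \mapsto K_{T,\mu}^{\Delta}(p)$. For the dominating function I would use $K_{T,\mu}^{\Delta_\ell} \ge E_\mu^{\Delta_\ell} \ge |p^2-\mu|$ together with the uniform bound $\sup_\ell \|\Delta_\ell\|_\infty < \infty$ to get $|1/K_{T,\mu}^{\Delta_\ell}(p) - 1/p^2| = O(|p|^{-4})$ for large $|p|$, uniformly in $\ell$; on bounded sets, for $T>0$ one has $1/K_{T,\mu}^{\Delta_\ell} \le 1/(2T)$, so the integrand is dominated by an integrable function and the claimed limit follows.

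The main obstacle is securing all the estimates \emph{uniformly in $\ell$}. Two points are delicate. First, the $L^\infty$ bound on $\Delta_\ell$ rests on an $L^3$ bound for $\alpha_\ell$, which sits exactly at the Sobolev endpoint $b=1$, whereas Lemma \ref{lemma:minimizer} supplies only $\dot H^{b/2}$ control for $b<1$; closing this gap is where the additional hypotheses $|x|V \in L^1 \cap L^2$ enter. Second, and more essentially, the equicontinuity estimate must reconcile the $\ell$-dependent scaling of the kernel with the fact that $\hat\alpha_\ell$ is controlled only in a weighted $L^2$ sense, not in $L^1$: indeed $\Delta_\ell$ is genuinely non-constant for each fixed $\ell$ (it decays at infinity, unlike its limit), so the $q$-integral in the difference must be split into the region $|q| \lesssim \ell^{-1}$, where the Lipschitz bound on $\hat V$ produces the small factor, and its complement, where the decay of $\hat V$ is used. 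For $T=0$ the dominated-convergence step additionally requires a positive lower bound on $|\Delta_\ell|$ near the Fermi surface, which is available only in the superfluid regime $T < T_c$ where the limit $\Delta$ is strictly positive.
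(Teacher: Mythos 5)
Your overall architecture for part \ref{lemma:convergence:Delta} --- an equicontinuity bound whose Lipschitz constant vanishes as $\ell\to 0$, plus compactness of the values at a single point --- is the same as the paper's (there, $|\Delta_\ell(p)-c_\ell|\to 0$ pointwise with $c_\ell=(2\pi)^{-3/2}\int V_\ell\alpha_\ell$, plus boundedness of $c_\ell$). However, both uniform estimates you need are left open, and at exactly the places you yourself flag as ``delicate'' the proposed routes cannot be completed with the information available. First, the uniform bound on $\Delta_\ell$: your chain $\|\Delta_\ell\|_\infty\le C\|V_\ell\|_{3/2}\|\alpha_\ell\|_3$ requires an $L^3$ bound on $\alpha_\ell$, i.e.\ $\dot H^{1/2}$-type control of $\hat\alpha_\ell$, which is precisely the endpoint $b=1$ excluded from Lemma \ref{lemma:minimizer}; the hypotheses $|x|V\in L^1\cap L^2$ do not repair this (in the paper they enter the pointwise estimate \eqref{eq:Delta_bound} and the Hilbert--Schmidt bound \eqref{eq:A_q_bound}, never any $L^3$ bound on $\alpha_\ell$). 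The paper needs no $L^\infty$ bound at all: it suffices to bound the single number $c_\ell$, and this is done by contradiction --- if $|c_\ell|\to\infty$ then, since $\Delta_\ell-c_\ell\to 0$ pointwise, $\frac{|\Delta_\ell|^2}{(p^2-\mu)^2+|\Delta_\ell|^2}\to 1$ on every ball, Eq.~\eqref{eq:lim_P}, contradicting the uniform $L^2$ bound on $\Delta_\ell/E_\mu^{\Delta_\ell}$ inherited from $\|\hat\alpha_\ell\|_2\le C$. That argument uses only $L^2$ information and never touches the endpoint.

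Second, the equicontinuity step itself. Your inner region $|q|\le\ell^{-1}$ does give $O(\ell^{1/2})|p_1-p_2|$ by Cauchy--Schwarz against $\|\hat\alpha_\ell\|_2$, but the outer region does not close: there $\hat\alpha_\ell=\Delta_\ell/(2K_{T,\mu}^{\Delta_\ell})$ decays only like $|q|^{-2}$ times a quantity with no quantitative decay, so no uniform $L^1$ bound is available; Cauchy--Schwarz against the weights $|q|^b$, $b<1$, or against $\|\hat V_\ell\|_2=\lambda(\ell)\ell^{-1/2}\|V\|_2$ produces factors that diverge exactly at the $b=1$ endpoint; and ``the decay of $\hat V$'' is only Riemann--Lebesgue, with no rate, under the stated assumptions on $V$. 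The paper's trick is to perform this estimate in position space rather than momentum space:
$|\Delta_\ell(p)-c_\ell|\le C\|\alpha_\ell\|_2\bigl(\int_{\mathbb{R}^3}|(\ee^{-ip\cdot x}-1)V_\ell(x)|^2\ddd{3}x\bigr)^{1/2}\le C\lambda(\ell)\ell^{1/2}|p|\,\bigl\|\,|x|V\bigr\|_2$,
which needs only $\|\alpha_\ell\|_2$ and $|x|V\in L^2$ --- that is where the extra hypothesis actually does its work, and no splitting is needed. Finally, your part \ref{lemma:convergence:m} rests on $\sup_\ell\|\Delta_\ell\|_\infty<\infty$, i.e.\ on the first gap; the paper instead rewrites the integrand via \eqref{KT} as $\frac{1}{p^2-\mu}-\frac{1}{p^2}-\frac{2\hat\gamma_\ell(p)}{p^2-\mu}$, kills the tail $|p|^2\ge R$ uniformly in $\ell$ using the moment bound \eqref{behgamma} on $\hat\gamma_\ell$, and applies dominated convergence only on the bounded region $|p|^2\le R$, which requires no sup bound on $\Delta_\ell$ and no separate discussion of $T=0$.
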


We shall see later that it is not necessary to restrict to a subsequence, the result holds in fact for the whole sequence. 

\begin{proof}
  \proofparagraph{\ref{lemma:convergence:Delta}}
  Set $c_\ell = \frac{1}{(2\pi)^{3/2}}\int_{\mathbb{R}^3} V_\ell(x)
  \alpha_\ell(x) \ddd{3}x$. Then
  \begin{align*}
    |\Delta_\ell(p)-c_\ell|
    &\leq \frac{1}{(2\pi)^{3/2}}\int_{\mathbb{R}^3} \big|(e^{-ip\cdot
      x}-1) V_\ell(x) \alpha_\ell(x) \big| \ddd{3}x\\
    &\leq (2\pi)^{-3/2} \|\alpha_\ell\|_2
    \left(\int_{\mathbb{R}^3} \big|(e^{-ip\cdot
      x}-1) V_\ell(x)\big|^2 \ddd{3}x\right)^{1/2}.
  \end{align*}
  Now $\|\alpha_\ell\|_2$ is uniformly bounded in $\ell$ and
  $|\cdot|V \in L^2(\mathbb{R}^3)$  by assumption, so
  \begin{equation}
    \label{eq:Delta_bound}
    \begin{split}
      \int_{\mathbb{R}^3} \big|(e^{-ip\cdot x}-1) V_\ell(x)\big|^2
      \ddd{3}x &= \ell^{-1} \lambda(\ell)^2 \int_{\mathbb{R}^3}
      \big|(e^{-i \ell p \cdot x}-1) V(x)\big|^2
      \ddd{3}x\\
      &\leq \ell \lambda(\ell)^2 |p|^2\big\|\,|\cdot|V\big\|_2^2.
    \end{split}
  \end{equation}
  Hence, $|\Delta_\ell(p)-c_\ell|$ converges to zero pointwise.
  Since $\hat{\alpha}_\ell = -2
  (K_{T,\mu}^{\Delta_\ell})^{-1} \Delta_\ell$ is uniformly bounded
  in $L^2$, it is straightforward to see, using $E_\mu^{\Delta_\ell}\geq |\Delta_\ell|$, that the same holds for the sequence  $\Delta_\ell/E_\mu^{\Delta_\ell}$. 
  This fact can now be used to show that the sequence $|c_\ell|$ is a uniformly bounded. 
    Assume on the contrary that $\bar{c} = \lim \sup_{\ell \to 0} |c_\ell| =  \infty$. Then by dominated convergence
    \begin{equation}
      \label{eq:lim_P}
      \limsup_{\ell\to 0}\int_{|p|\leq R}
      \frac{|\Delta_\ell|^2}{(p^2-\mu)^2+|\Delta_\ell|^2} \ddd{3}p
      =
      \int_{|p|\leq R} \limsup_{\ell\to
        0}\frac{1}{\frac{(p^2-\mu)^2}{|c_\ell|^2}+1} \ddd{3}p
      = \frac{4}{3}\pi R^3.
    \end{equation}
  However,  the divergence of right side of Eq.~\eqref{eq:lim_P} as $R\to
  \infty$ contradicts the uniform boundedness of
  $\Delta_\ell/E_\mu^{\Delta_\ell}$  in $L^2(\mathbb{R}^3)$. Hence $\bar{c} < \infty$ and $\lim_{\ell\to 0}
  |\Delta_\ell(p)| = \bar c$ for a suitable subsequence.

  \proofparagraph{\ref{lemma:convergence:m}} Obviously by \ref{lemma:convergence:Delta} the integrand of $m_\mu^{\Delta_\ell} (T) $ 
  converges pointwise to the integrand of $m_\mu^{\Delta} (T) $. 
  By \eqref{eq:el_gamma} we are able to rewrite the integrand as 
  \begin{equation}\label{KT}  \frac{1}{K_{T,\mu}^{\Delta_\ell}(p)} - \frac{1}{p^2} = \frac 1{p^2 - \mu} - \frac 1{p^2} - \frac{2 \hat \gamma_\ell(p)}{p^2 - \mu} .\end{equation}
  Using \eqref{behgamma} we now conclude that 
  $$ \lim_{R\to \infty} \lim_{\ell \to 0} \int_{|p|^2 \geq R}\left( \frac{1}{K_{T,\mu}^{\Delta_\ell}(p)} - \frac{1}{p^2}\right ) \ddd{3} p  =0,$$
  which together with the dominated convergence inside $|p|^2 \leq R$, implies the statement of \ref{lemma:convergence:m}.

\end{proof}

\begin{proposition} Let $a = \lim_{\ell \to 0} a(V_\ell)$, then
  \label{prop:gap_eff}
  \begin{equation}
    \label{eq:gap_eff:lim}
    \lim_{\ell\to 0} \frac{1}{(2\pi)^3}\int_{\mathbb{R}^3}\left(\frac{1}{K_{T,\mu}^{\Delta_\ell}}
  -\frac{1}{p^2}\right) \ddd{3}p =- \frac{1}{4\pi a} .
\end{equation}
\end{proposition}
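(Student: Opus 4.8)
The plan is to convert the gap equation into a Birman--Schwinger eigenvalue problem and to compare it, to leading order in $\ell$, with the Birman--Schwinger problem defining the scattering length. By Lemma~\ref{lemma:convergence}\,\ref{lemma:convergence:m} the left-hand side of \eqref{eq:gap_eff:lim} equals $m_\mu^{\Delta}(T)$, so it suffices to identify $m_\mu^{\Delta}(T)$ with $-\tfrac{1}{4\pi a}$. First I would rewrite the Euler--Lagrange equations \eqref{eq:el_gamma} together with $\Delta_\ell = 2(2\pi)^{-3/2}\hat V_\ell*\hat\alpha_\ell$ as $\big(K_{T,\mu}^{\Delta_\ell}(-i\nabla)+V_\ell\big)\alpha_\ell=0$. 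Setting $\varphi_\ell=|V_\ell|^{1/2}\alpha_\ell$ and using $V_\ell^{1/2}=\sgn(V_\ell)|V_\ell|^{1/2}$ turns this into the Birman--Schwinger relation $(1+B_\ell)\varphi_\ell=0$, where $B_\ell=|V_\ell|^{1/2}\tfrac{1}{K_{T,\mu}^{\Delta_\ell}}V_\ell^{1/2}$.

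I then split $B_\ell=B_\ell^0+D_\ell$, with $B_\ell^0=|V_\ell|^{1/2}\tfrac{1}{p^2}V_\ell^{1/2}$ the operator governing the scattering length and $D_\ell=|V_\ell|^{1/2}\big(\tfrac{1}{K_{T,\mu}^{\Delta_\ell}}-\tfrac{1}{p^2}\big)V_\ell^{1/2}$. Since $B_\ell^0$ is the adjoint of $V_\ell^{1/2}\tfrac{1}{p^2}|V_\ell|^{1/2}$, relation \eqref{Xell} shows that $B_\ell^0$ has the simple eigenvalue $-\lambda(\ell)$ with right eigenvector $\chi_\ell=\sgn(V_\ell)\phi_\ell$ and left eigenvector $\phi_\ell$. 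The core of the argument is the exact scalar identity obtained by pairing $(1+B_\ell)\varphi_\ell=0$ with $\phi_\ell$: because $\langle\phi_\ell|B_\ell^0=-\lambda(\ell)\langle\phi_\ell|$,
\[
 (1-\lambda(\ell))\,\langle\phi_\ell|\varphi_\ell\rangle=-\langle\phi_\ell|D_\ell|\varphi_\ell\rangle .
\]
Dividing by $\langle\phi_\ell|\varphi_\ell\rangle$ and using that, after normalization, $\varphi_\ell$ aligns with $\chi_\ell$ as $\ell\to0$ (so the unknown scalar cancels in the quotient), the right-hand side becomes $-\langle\phi_\ell|D_\ell|\chi_\ell\rangle/\langle\phi_\ell|\chi_\ell\rangle$ up to lower order.

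Next I would extract the $\ell\to0$ asymptotics. Here $\langle\phi_\ell|\chi_\ell\rangle=\langle\sgn(V)\phi|\phi\rangle$ is $\ell$-independent, while $V_\ell^{1/2}\sgn(V_\ell)=|V_\ell|^{1/2}$ gives $\langle\phi_\ell|D_\ell|\chi_\ell\rangle=\langle h_\ell|\,W\,|h_\ell\rangle$ with $h_\ell=|V_\ell|^{1/2}\phi_\ell$ and $W=\tfrac{1}{K_{T,\mu}^{\Delta_\ell}}-\tfrac{1}{p^2}$ a Fourier multiplier. The dilation $U_\ell$ yields $\hat h_\ell(p)=\lambda^{1/2}\ell^{1/2}\hat h(\ell p)$ with $h=|V|^{1/2}\phi$, so $\langle h_\ell|W|h_\ell\rangle=\lambda\ell\int|\hat h(\ell p)|^2W(p)\,\ddd{3}p=\lambda\ell\,|\hat h(0)|^2(2\pi)^3 m_\mu^{\Delta}(T)\,(1+o(1))$, using $\hat h(0)=(2\pi)^{-3/2}\langle|V|^{1/2}|\phi\rangle$, the boundedness and continuity of $\hat h$, the pointwise convergence $\Delta_\ell\to\Delta$ of Lemma~\ref{lemma:convergence}\,\ref{lemma:convergence:Delta}, and dominated convergence with $W\in L^1$. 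Combining this with $1-\lambda(\ell)=-\lambda'(0)\,\ell+o(\ell)$, dividing by $\ell$ and letting $\ell\to0$ (so $\lambda\to1$), all $\ell$-factors cancel and one obtains $m_\mu^{\Delta}(T)=\lambda'(0)\,\langle\sgn(V)\phi|\phi\rangle/|\langle|V|^{1/2}|\phi\rangle|^2$. Comparing with the asymptotics of $a=\lim_{\ell\to0}a(V_\ell)$ computed in \cite[Appendix~A.1]{BHS} (cf.\ the Remark above), the right-hand side is exactly $-\tfrac{1}{4\pi a}$; together with Lemma~\ref{lemma:convergence}\,\ref{lemma:convergence:m} this proves \eqref{eq:gap_eff:lim}, and since the value is independent of the chosen subsequence the subsequence restriction of Lemma~\ref{lemma:convergence} may be dropped.

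The main obstacle is the alignment step: one must show that the exact Birman--Schwinger vector $\varphi_\ell=|V_\ell|^{1/2}\alpha_\ell$ of the true minimizer, once normalized, converges to the resonance direction $\chi_\ell=\sgn(V_\ell)\phi_\ell$ strongly enough that $\langle\phi_\ell|D_\ell|\varphi_\ell\rangle/\langle\phi_\ell|\varphi_\ell\rangle$ has the stated $O(\ell)$ behaviour uniformly. This is where the zero-energy resonance expansion of the scaled operators enters, precisely the analysis of \cite[chap.~I.1.2--4]{albeverio} and \cite[Appendix~A.1]{BHS}; the remaining ingredients (the uniform $L^2$-bound on $\alpha_\ell$ from Lemma~\ref{lemma:minimizer}, the integrability of $W$, and the norm scaling \eqref{eq:V:p}) are routine under the stated hypotheses on $V$.
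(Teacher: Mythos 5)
Your route is genuinely different from the paper's, and in outline it is sound. The paper never touches eigenvector perturbation theory: it decomposes $V_\ell^{1/2}\tfrac{1}{K_{T,\mu}^{\Delta_\ell}}|V_\ell|^{1/2}$ into $V_\ell^{1/2}\tfrac{1}{p^2}|V_\ell|^{1/2}$ plus the rank-one piece $m_\mu^{\Delta_\ell}(T)\,|V_\ell^{1/2}\rangle\langle |V_\ell|^{1/2}|$ plus an error $A_{\mu,T,\ell}$ with $\|A_{\mu,T,\ell}\|_2=O(\ell^{3/2})$, factors out the invertible operator $1+V_\ell^{1/2}\tfrac{1}{p^2}|V_\ell|^{1/2}$ (whose inverse has norm $O(\ell^{-1})$), and reduces the Birman--Schwinger condition to a rank-one operator having eigenvalue $-1$; taking the trace and inserting the \emph{exact} formula \eqref{eq:a} for $a(V_\ell)$ gives $4\pi a(V_\ell)+1/m_\mu^{\Delta_\ell}(T)=O(\ell^{1/2})$ at every fixed $\ell$, with explicit error. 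You instead do first-order perturbation theory around the resonance: pair $(1+B_\ell^0+D_\ell)\varphi_\ell=0$ with the left eigenvector $\phi_\ell$, evaluate the resulting quotient on the resonance direction, and match the outcome, expressed in terms of $\lambda'(0)$, $\phi$ and $V$, against the separately computed limit of $a(V_\ell)$. The paper's route buys: no alignment lemma, no need for the asymptotics of $a(V_\ell)$ (the exact formula \eqref{eq:a} is used at each $\ell$), and clean error control. Your route buys transparency --- it exhibits the balance of $1-\lambda(\ell)$, the resonance coupling, and $m_\mu^{\Delta}(T)$ all at order $\ell$ --- and it re-derives the Remark's formula as a by-product rather than assuming it.

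Two caveats, one of which is a real gap as written. The alignment step you defer is not contained in the references you cite: \cite[chap.~I.1.2--4]{albeverio} and \cite[Appendix~A.1]{BHS} analyze the \emph{free} resonance family and the scattering length, not the eigenvector of the interacting operator $1+B_\ell$. The step can be completed --- by scaling, $B_\ell^0$ has an $\ell$-uniform spectral gap at $-\lambda(\ell)$, so with the Riesz projection $P_\ell=|\chi_\ell\rangle\langle\phi_\ell|/\langle\phi_\ell|\chi_\ell\rangle$ one gets $\|(1-P_\ell)\varphi_\ell\|\leq C\|D_\ell\|\,\|\varphi_\ell\|$ and relative errors $O(\|D_\ell\|)$ in your scalar identity --- but this requires the operator-norm bound $\|D_\ell\|=O(\ell)$, and proving \emph{that} is exactly where the paper's machinery re-enters: $D_\ell$ is the sum of a rank-one piece of norm $|m_\mu^{\Delta_\ell}(T)|\,\|V_\ell\|_1=O(\ell)$ and a remainder which is (the adjoint of) the paper's $A_{\mu,T,\ell}$, bounded via \eqref{eq:A}--\eqref{eq:A_q_bound}, together with the a priori boundedness of $m_\mu^{\Delta_\ell}(T)$ from Lemma~\ref{lemma:convergence}. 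So once made rigorous, your proof is not shorter; the two arguments share their technical core. Second, a bookkeeping point: your computation yields $m_\mu^{\Delta}(T)=\lambda'(0)\langle\sgn(V)\phi|\phi\rangle/|\langle |V|^{1/2}|\phi\rangle|^2$, and this equals $-1/(4\pi a)$ only if the Remark's right-hand side is identified with $\lim_{\ell\to 0}4\pi a(V_\ell)$ rather than $\lim_{\ell\to 0}a(V_\ell)$; comparing the Remark with \eqref{eq:a} and the scaling of $V_\ell$ shows this is indeed the correct reading (the printed Remark is off by the factor $4\pi$). Your identification is therefore right, but verify the constant directly from \eqref{eq:a} rather than quoting the Remark verbatim, and note finally that, like the paper, your argument tacitly assumes $\alpha_\ell\neq 0$ so that a Birman--Schwinger eigenvector exists at all.
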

\begin{proof}
  We again follow the proof of \cite[Theorem~2]{BHS}. 
 Observe that with help of the second relation in \eqref{eq:el_gamma} the BCS gap equation \eqref{bcseintro} for $\alpha_\ell$ 
 can be conveniently written in the form 
  \begin{equation*}
    (K_{T,\mu}^{\Delta_\ell} + V_\ell) {\alpha}_\ell =
    0, \quad
    \textrm{ with } {\alpha}_\ell \in H^1(\mathbb{R}^3)\,.
  \end{equation*}
  By means of the Birman--Schwinger principle one concludes that $ K_{T,\mu}^{\Delta_\ell} + V_\ell$ having $0$ as an eigenvalue is equivalent to  $V_\ell^{1/2}\frac{1}{K_{T,\mu}^{\Delta_\ell}} |V_\ell|^{1/2}$
  having $-1$ as eigenvalue.
  
  We now rewrite
  $V_\ell^{1/2}\frac{1}{K_{T,\mu}^{\Delta_\ell}}
  |V_\ell|^{1/2}$ as
  \begin{align}\label{V}
    V_\ell^{1/2} \frac{1}{K_{T,\mu}^{\Delta_\ell}}
    |V_\ell|^{1/2} = V_\ell^{1/2} \frac{1}{p^2} |V_\ell|^{1/2} +
    m_\mu^{\Delta_\ell}(T) |V_\ell^{1/2}\rangle \langle
    |V_\ell|^{1/2}| + A_{\mu,T,\ell},
  \end{align}
  where $A_{\mu,T,\ell}$ is given in terms of the integral kernel
\begin{equation}
  \label{axy}
    A_{\mu,T,\ell}(x,y) =
    \frac{V_\ell(x)^{\frac{1}{2}}|V_\ell(y)|^{\frac{1}{2}}}
    {(2\pi)^3}\int_{\mathbb{R}^3}\left(
      \frac{1}{K_{T,\mu}^{\Delta_\ell}}-\frac{1}{p^2}\right)
    (\ee^{-i(x-y)\cdot p}-1) \ddd{3} p\,,
 \end{equation}
which can  be estimated, e. g., by
\begin{equation}
  \label{eq:A}
  |A_{\mu,T,\ell}(x,y)|
  \leq
  \frac{|V_\ell(x)|^{\frac{1}{2}}|V_\ell(y)|^{\frac{1}{2}}}
  {(2\pi)^3}\int_{\mathbb{R}^3}\left|
    \frac{1}{K_{T,\mu}^{\Delta_\ell}}-\frac{1}{p^2}\right|
  (|x-y|\ |p|)^{1/2} \ddd{3} p.
\end{equation}
Using the relation \eqref{KT} as well as the uniform boundedness of 
$\int_{\mathbb{R}^3}  \hat{\gamma}_\ell(p) |p|^{1/2} \ddd{3}p$ we are able to conclude that  the integral
\begin{equation*}
  \int_{\mathbb{R}^3}\left|
    \frac{1}{K_{T,\mu}^{\Delta_\ell}}-\frac{1}{p^2}\right|
  |p|^{1/2} \ddd{3} p =   \int_{\mathbb{R}^3}\left|\frac 1{p^2 - \mu} - \frac 1{p^2} - \frac{2 \hat \gamma_\ell(p)}{p^2 - \mu} \right|
  |p|^{1/2} \ddd{3} p
\end{equation*}
is uniformly bounded in $\ell$. We can thus bound the
Hilbert-Schmidt norm of $A_{\mu,T,\ell}$ by
\begin{equation}
  \|A_{\mu,T,\ell}\|_2 \leq \textrm{const}\, \|V_\ell |\cdot|\|_1^{1/2} \|V_\ell\|^{1/2}_1 
  \leq O(\ell^{3/2})
  \,. \label{eq:A_q_bound}
\end{equation}
 
  We further proceed with equation \eqref{V}. 
  By construction,
  $1+V_\ell^{1/2} \frac{1}{p^2} |V_\ell|^{1/2}$ is invertible and thus
  can be factored out, i.e.,
  \begin{align*}
    1 + V_\ell^{1/2}\frac{1}{K_{T,\mu}^{\Delta_\ell}}
    |V_\ell|^{1/2}
    = &\left(1+V_\ell^{1/2} \frac{1}{p^2} |V_\ell|^{1/2} \right) \times\\
    &\times\left[1+\tfrac{1}{1+ V_\ell^{1/2} \frac{1}{p^2}
        |V_\ell|^{1/2}}
      \left(m_\mu^{\Delta_\ell}(T)|{\scriptstyle V_\ell^{\frac{1}{2}}}\rangle
        \langle {\scriptstyle |V_\ell|^{\frac{1}{2}}}| + A_{T,\mu,\ell}\right)\right],
  \end{align*}
  with the second term on the right hand side necessarily having an eigenvalue $0$. 
  With $J= V_\ell(x) / |V_\ell(x)|$ and $X= |V_\ell|^{1/2}
    \frac{1}{p^2}|V_\ell|^{1/2} ,$  we are able to rewrite 
  $$\frac 1{1+ V_\ell^{1/2}
    \frac{1}{p^2}|V_\ell|^{1/2} } = \frac{1}{1+ JX}  = 1 - JX^{1/2}
    \frac{1}{1+X^{1/2} J X^{1/2}} X^{1/2}\,.$$
  This  allows us to bound 
   $$  \Bigl\| \frac{1}{1+ V_\ell^{1/2}
    \frac{1}{p^2}|V_\ell|^{1/2}} \Bigr\| \leq 1 + \left \|X\right \|\left \| \frac{1}{1+X^{1/2} J X^{1/2} }\right \| \leq O(\ell^{-1}),$$
  where we have used that, due to the HLS-inequality,  $ \left \|X \right \|  \leq C \|V_\ell\|_{3/2}$,  as well as  the fact that $1 +  X^{1/2} J X^{1/2}$ is self-adjoint with its lowest eigenvalue of order  $O(\ell)$. Indeed, $X^{1/2} J X^{1/2}$ has the same spectrum as $JX$.  Hence,
      \begin{align*}
  \Bigl\| \tfrac{1}{1+ V_\ell^{1/2}
    \frac{1}{p^2}|V_\ell|^{1/2}} A_{\mu,T,\ell} \Bigr\|
  &\leq
  \Bigl\| \tfrac{1}{1+ V_\ell^{1/2}
    \frac{1}{p^2}|V_\ell|^{1/2}} \Bigr\| \| A_{\mu,T,\ell} \| \leq O(\ell^{1/2}).
\end{align*}
Since 
  \begin{equation*}
    \frac{1}{1+ V_\ell^{1/2} \frac{1}{p^2}
      |V_\ell|^{1/2}}\left(m_\mu^{\Delta_\ell}(T)
      |V_\ell^{1/2}\rangle \langle |V_\ell|^{1/2}| +
      A_{T,\mu,\ell}\right)
  \end{equation*}
has an eigenvalue $-1$ and  $1+(1+V_\ell^{1/2}
  \frac{1}{p^2}|V_\ell|^{1/2})^{-1} A_{\mu,T,\ell}$ is
  invertible for small enough $\ell$, 
  we can argue  by factoring out the term  $1+(1+V_\ell^{1/2}
  \frac{1}{p^2}|V_\ell|^{1/2})^{-1} A_{\mu,T,\ell}$
 that the rank one operator
  \begin{equation*}
    m_\mu^{\Delta_\ell}(T) \left(1+  \frac{1}{1+ V_\ell^{1/2}
        \frac{1}{p^2}|V_\ell|^{1/2}} A_{\mu,T,\ell}\right)^{-1}   \frac{1}{1+ V_\ell^{1/2} \frac{1}{p^2}
      |V_\ell|^{1/2}}|V_\ell^{1/2}\rangle \langle |V_\ell|^{1/2}|
  \end{equation*}
  has an eigenvalue $-1$, which, by taking the trace, implies 
  \begin{equation}
    \label{trw}
    -1 = m_\mu^{\Delta_\ell}(T)\left< |V_\ell|^{1/2}\middle| \left[1+  \tfrac{1}{1+ V_\ell^{1/2}
            \frac{1}{p^2}|V_\ell|^{1/2}} A_{\mu,T,\ell}\right]^{-1}   \tfrac{1}{1+ V_\ell^{1/2} \frac{1}{p^2}
          |V_\ell|^{1/2}}\middle|V_\ell^{1/2}\right>.
  \end{equation}
  With the aid of Eq.~\eqref{eq:a} and the resolvent identity, we can 
  rewrite Eq.~\eqref{trw} as 
  \begin{equation}
    \label{aab}
    \begin{split}
      & 4\pi a(V_\ell) + \frac 1 { m_\mu^{\Delta_\ell}(T)} \\
      &=
      \left< |V_\ell|^{1/2}\right| \tfrac{1}{1+ V_\ell^{1/2}
        \frac{1}{p^2}|V_\ell|^{1/2}} A_{\mu,T,\ell} \Bigl[1+
      \tfrac{1}{1+ V_\ell^{1/2} \frac{1}{p^2}|V_\ell|^{1/2}}
      A_{\mu,T,\ell}\Bigr]^{-1} \times\\
      &\qquad\times\tfrac{1}{1+ V_\ell^{1/2}
        \frac{1}{p^2}
        |V_\ell|^{1/2}}\left|V_\ell^{1/2}\right>, 
    \end{split}
  \end{equation}
where the right hand side is bounded by  
\begin{align*}
  \|V_\ell\|_1 \Bigl\| &\tfrac{1}{1+ V_\ell^{1/2}
    \frac{1}{p^2}|V_\ell|^{1/2}}\Bigr\| \Bigl\| \tfrac{1}{1+ V_\ell^{1/2}
    \frac{1}{p^2}|V_\ell|^{1/2}} A_{\mu,T,\ell} \Bigr\| \times \\
  &\times
  \Bigl\| \Bigl[1+
  \tfrac{1}{1+ V_\ell^{1/2} \frac{1}{p^2}|V_\ell|^{1/2}}
  A_{\mu,T,\ell}\Bigr]^{-1} \Bigr\| \leq O(\ell^{1/2}) \,.
\end{align*}
This implies Eq.~\eqref{eq:gap_eff:lim}  and completes the proof.

\end{proof}

With the aid of Lemma~\ref{lemma:convergence} and
Proposition~\ref{prop:gap_eff}, we can now finish the proof of Theorem \ref{thm:gap_eff}. 
\begin{proof}[Proof of Theorem \ref{thm:gap_eff}]
We know from Lemma~\ref{lemma:convergence} and Proposition~\ref{prop:gap_eff} that  $|\Delta_\ell(p)|$ has a subsequence that converges to a constant function $\Delta$, which satisfies 
the equation  
\begin{equation*}
    -\frac{1}{4\pi a} = 
    \frac{1}{(2\pi)^3}\int_{\mathbb{R}^3}\left(\frac{1}{K_{T,\mu}^{\Delta}}
      -\frac{1}{p^2}\right) \ddd{3}p.
\end{equation*}
Since the solution $|\Delta|$ of \eqref{eq:gap_eff}  is unique we obtain that the sequence 
$|\Delta_\ell(p)|$ converges to the unique solution of  \eqref{eq:gap_eff}.
Furthermore, this shows that the limit of $|\Delta_\ell|$ does not vanish in the case that 
$T< T_c$, and that the limit vanishes for $T\geq T_c$. 
%
\end{proof}


\begin{thebibliography}{10}

\bibitem{bcs}
J. Bardeen, L. Cooper, J. Schrieffer,  {\em Phys. Rev.} {\bf 108}, 1175 (1957).

\bibitem{Leggett}
A.~J. Leggett, Diatomic molecules and cooper pairs, in {\em Modern Trends in
  the Theory of Condensed Matter\/},  eds. A.~Pekalski and J.~Przystawa,
  Lecture Notes in Physics, Vol.~115 (Springer Berlin / Heidelberg, 1980) pp.
  13--27.

\bibitem{NRS}
P.~Nozi\'eres and S.~Schmitt-Rink, {\em Journal of Low Temperature Physics}
  {\bf 59}, 195 (1985).
  
\bibitem{randeria}
M.~Randeria, Crossover from {BCS} theory to Bose-Einstein condensation, in {\em
  {B}ose-{E}instein {C}ondensation\/},  eds. A.~Griffin, D.~W. Snoke and
  S.~Stringari (Cambridge University Press, July 1996), pp. 355--392.

\bibitem{HHSS}
C.~Hainzl, E.~Hamza, R.~Seiringer, J.~P. Solovej, {\em Comm. Math. Phys.}
  {\bf 281}, 349 (2008).

\bibitem{BHS}
G.~Br{\"a}unlich, C.~Hainzl, R.~Seiringer, {\em preprint}   arXiv/1305.5135.

\bibitem{BLS}
V.~Bach, E.~H. Lieb, J.~P. Solovej, {\em J. Statist. Phys.} {\bf 76}, 3
  (1994).

\bibitem{FHNS2007}
R.~L. Frank, C.~Hainzl, S.~Naboko, R.~Seiringer, {\em J. Geom. Anal.} {\bf
  17}, 559 (2007).

\bibitem{Gorkov}
L.~P. Gor'kov, {\em ZH. Eksp. Teor. Fiz.} {\bf 36}, 1918 (1959).

\bibitem{HS-mu}
C.~Hainzl and R.~Seiringer, {\em Lett. Math. Phys.} {\bf 84}, 99 (2008).

\bibitem{albeverio}
S.~Albeverio, F.~Gesztesy, R.~H{\o}egh-Krohn, H.~Holden, {\em Solvable
  models in quantum mechanics},  Texts and
  monographs in physics (Springer-Verlag New York, 1988).

\bibitem{zwerger-1992}
M.~Drechsler and W.~Zwerger, {\em Ann. Phys.} {\bf 504}, 15 (1992).

\bibitem{Pieri-Strinati}
P.~Pieri and G.~C. Strinati, {\em Phys. Rev. Lett.} {\bf 91}, 030401 (2003).

\bibitem{HS}
C.~Hainzl and R.~Seiringer, {\em Lett. Math. Phys.} {\bf 100}, 119 (2012).

\bibitem{HSch}
C. Hainzl and B. Schlein, {\em Journal of Functional Analysis} {\bf 265}, 399 (2013).

\bibitem{BHS2}
G.~Br{\"a}unlich, C.~Hainzl, R.~Seiringer, {\em Meth. Funct. Anal. Top.} {\bf 19}, 364 (2013).

\bibitem{FHSS-micro_ginzburg_landau}
R.~L. Frank, C.~Hainzl, R.~Seiringer, J.~P. Solovej, {\em J. Amer. Math.
  Soc.} {\bf 25}, 667 (2012).

\bibitem{HLS2008}
C.~Hainzl, M.~Lewin, R.~Seiringer, {\em Rev. Math. Phys.} {\bf 20}, 1283 (2008).

\end{thebibliography}
\end{document}